%
\documentclass[conference]{IEEEtran}
\newtheorem{theorem}{Theorem}
\newtheorem{lemma}{Lemma}

\newtheorem{definition}{Definition}

\newtheorem{claim}{Claim}
\newtheorem{example}{Example}

\usepackage{cite}
\usepackage{verbatim}

\usepackage[cmex10]{amsmath}
\usepackage{amssymb}

\usepackage[caption=false,font=footnotesize]{subfig}
\usepackage{graphicx}

\usepackage{fixltx2e}
\usepackage{algorithmic,algorithm}
\begin{document}

\title{Multiterminal Source Coding with an Entropy-Based Distortion Measure}
%
%
%


\author{\IEEEauthorblockN{Thomas A. Courtade and Richard D. Wesel}
\IEEEauthorblockA{Department of Electrical Engineering\\
University of California, Los Angeles\\
Los Angeles, California 90095\\
Email: tacourta@ee.ucla.edu; wesel@ee.ucla.edu}}

\maketitle

\begin{abstract}
In this paper, we consider a class of multiterminal source coding problems, each subject to distortion constraints computed using a specific, entropy-based,  distortion measure.  We provide the achievable rate distortion region for two cases and, in so doing, we demonstrate a relationship between the lossy multiterminal source coding problems with our specific distortion measure and (1) the canonical Slepian-Wolf lossless distributed source coding network, and (2) the Ahlswede-K\"{o}rner-Wyner source coding with side information problem in which only one of the sources is recovered losslessly.
\end{abstract}


\section{Introduction}\label{sec:Intro}
\subsection{Background}
A complete characterization of the achievable rate distortion region for the classical lossy multiterminal source coding problem depicted in Fig. \ref{fig:MTSC} has remained an open problem for over three decades.  Several special cases have been solved:
\begin{itemize}
\item The lossless case where $D_x=0,D_y=0$.  Slepian and Wolf solved this case in their seminal work\cite{bib:SW}.
\item The case where one source is recovered losslessly: i.e., $D_x=0,D_y=D_{max}$.  This case corresponds to the source coding with side information problem of Ahlswede-K\"{o}rner-Wyner \cite{bib:AK},\cite{bib:W}.
\item The Wyner-Ziv case \cite{bib:W} where $Y^n$ is available to the decoder as side information and $X^n$ should be recovered with distortion at most $D_x$.
\item The Berger-Yeung case (which subsumes the previous three cases) \cite{bib:BY} where $D_x$ is arbitrary and $D_y=0$.
\end{itemize}

Despite the apparent progress, other seemingly fundamental cases, such as when $D_x$ is arbitrary and $D_y=D_{max}$, remain unsolved except perhaps in very special cases.

\subsection{Our Contribution}
In this paper, we give the achievable rate region for two cases subject to a particular choice of distortion measure $d(\cdot)$, defined in Section \ref{sec:probStatement}.  Specifically, for our particular choice of $d(\cdot)$, we give the achievable rate distortion region for the following two cases:
\begin{itemize}
\item The situation when $X$ and $Y$ are subject to a joint distortion constraint given a reproduction $\hat{Z}$:
\begin{align*}
\mathbb{E}\left[d(X,Y,\hat{Z}) \right]\leq D.
\end{align*}
\item The case where $X$ is subject to a distortion constraint given a reproduction $\hat{V}$:
\begin{align*}
\mathbb{E}\left[d(X,\hat{V}) \right]\leq D_x,
\end{align*}
and there is no distortion constraint on the the reproduction of $Y$ (i.e., $D_y$=$D_{max}$).
\end{itemize}

The regions depend critically on our choice of $d(\cdot)$, which can be interpreted as a natural measure of the soft information the reproduction $\hat{Z}$ symbol provides about the source symbols $X$ and $Y$ (resp. the information $\hat{V}$ provides about $X$).

The remainder of this paper is organized as follows.  In Section \ref{sec:probStatement} we formally define the problem and provide our main results.  In Section \ref{sec:proofs}, we discuss the properties of $d(\cdot)$ and provide the proofs of our main results.  Section \ref{sec:conc} delivers the conclusions and a brief discussion regarding further directions.

\begin{figure}[t]
\begin{center}
\includegraphics[scale=.32]{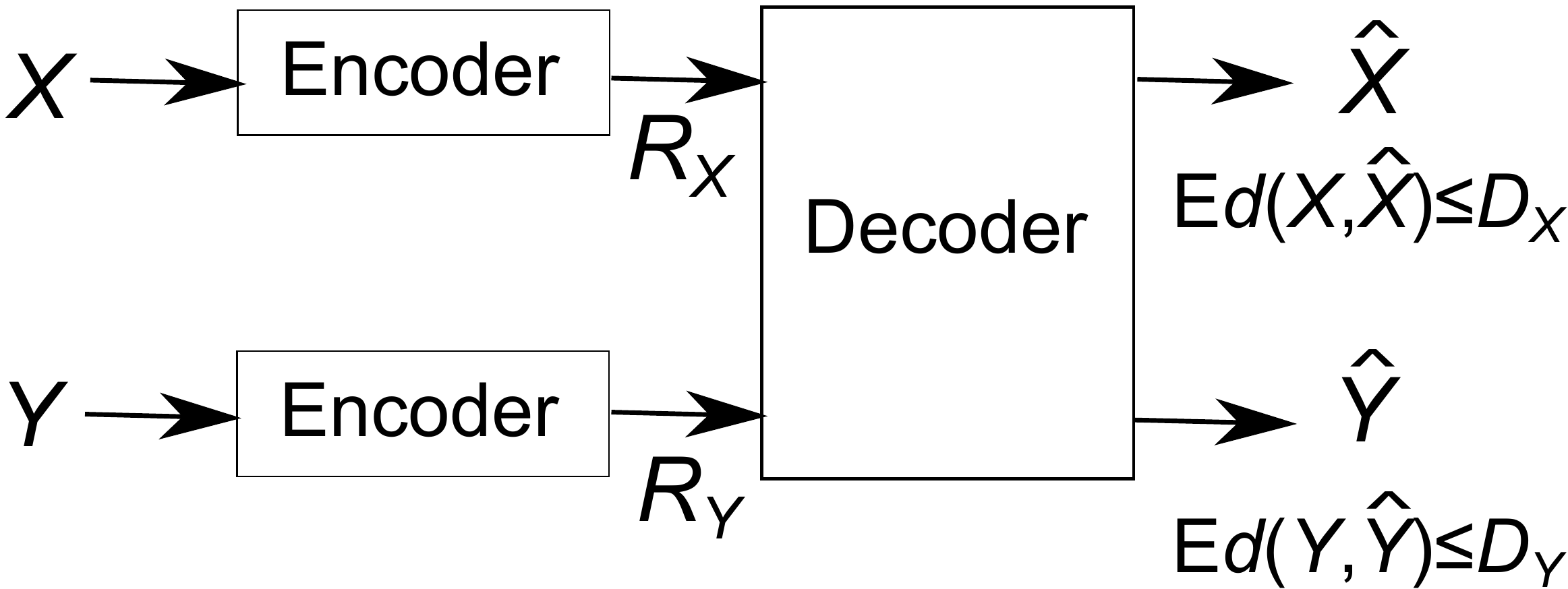}
\end{center}
\caption{Classical multiterminal source coding network.}
\label{fig:MTSC}
\end{figure}

\section{Problem Statement and Results} \label{sec:probStatement}
In this paper, we consider two cases of the lossy multiterminal source coding network presented in Fig. \ref{fig:JDXD}.

In the first case, we study the achievable rates $(R_x,R_y)$ subject to the joint distortion constraint
\begin{align*}
\mathbb{E}\left[d(X,Y,\hat{Z}) \right]\leq D,
\end{align*}
where $\hat{Z}$ is the joint reproduction symbol computed at the decoder from the messages $f_x$ and $f_y$ received from the $X$- and $Y$-encoders respectively.

In the second case, we study the achievable rates $(R_x,R_y)$ subject to a distortion constraint on $X$:
\begin{align*}
\mathbb{E}\left[d(X,\hat{V}) \right]\leq D_x,
\end{align*}
where $\hat{V}$ is the reproduction symbol computed at the decoder from the messages $f_x$ and $f_y$ received from the $X$- and $Y$-encoders respectively.  In this second case, there is no distortion constraint on $Y$.

\begin{definition}
To simplify terminology, we refer to the first and second cases described above as the Joint Distortion (JD) network and X-Distortion (XD) network respectively.
\end{definition}

\begin{figure}[htb]
\begin{center}
\includegraphics[scale=.32]{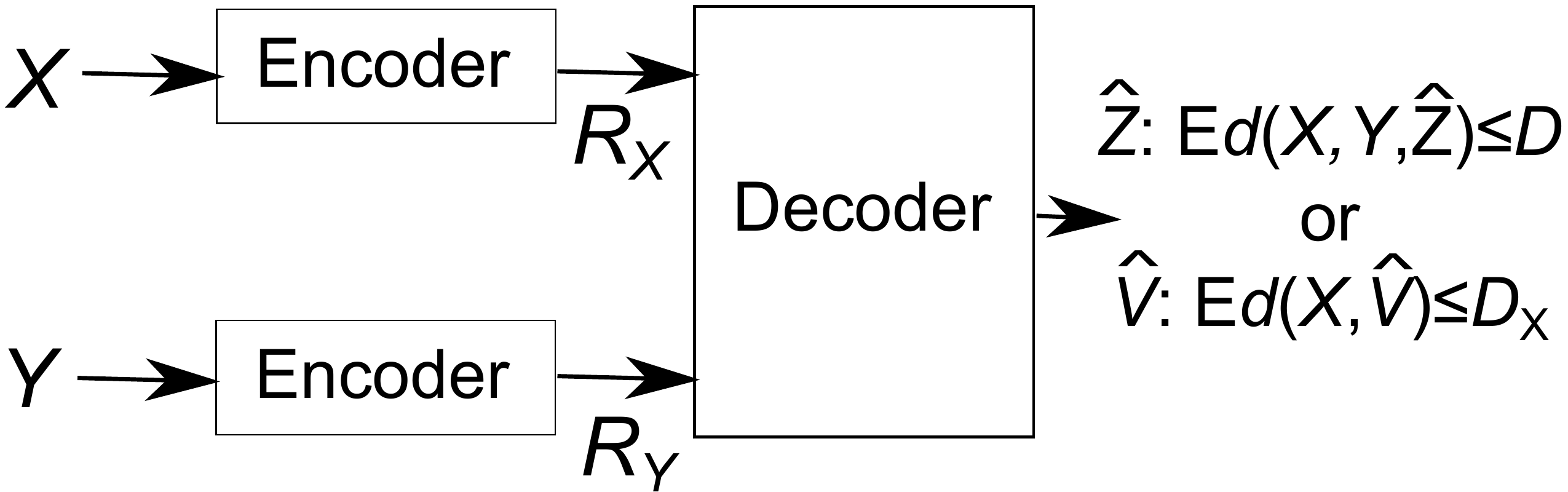}
\end{center}
\caption{The Joint Distortion (JD) and X-Distortion (XD) networks.}
\label{fig:JDXD}
\end{figure}

Formally, define the source alphabets as $\mathcal{X}=\{1,2,\dots,m\}$ and $\mathcal{Y}=\{1,2,\dots,\ell\}$.  We consider the discrete memoryless source sequences $X^n$ and $Y^n$ drawn i.i.d. according to the joint distribution $p(x,y)$.  Let $X^n$ be available at the $X$-encoder and $Y^n$ be available at the $Y$-encoder as depicted in Fig. \ref{fig:JDXD}. (We will informally refer to probability mass functions as distributions throughout this paper.)

For the case of joint distortion, we consider the reproduction alphabet $\hat{\mathcal{Z}}=\Delta_{m\times\ell}$, where $\Delta_k$ denotes the set of probability distributions on $k$ points.  In other words, for  $\hat{z}\in\hat{\mathcal{Z}}$, $\hat{z}=(q_{1,1},\dots,q_{m,\ell})$ where $q_{i,j}\geq0$ and $\sum_{i,j} q_{i,j}=1$.  With $\hat{z}$ defined in this way, it will be convenient to use the notation $\hat{z}(x,y)=q_{x,y}$ for $x\in\mathcal{X}, y\in\mathcal{Y}$.  Note that the restriction of the reproduction alphabet to the probability simplex places constraints on the function $\hat{z}(x,y)$.  For example, one cannot choose $\hat{z}(x,y)=x+y$.

Define the joint distortion measure $d:\mathcal{X}\times\mathcal{Y}\times\hat{\mathcal{Z}}\rightarrow \mathbb{R}^{+}$ by
\begin{align}
d(x,y,\hat{z})=\log\left( \frac{1}{\hat{z}(x,y)}\right),
\end{align}
and the corresponding distortion between the sequences $(x^n,y^n)$ and $\hat{z}^n$ as
\begin{align}
d(x^n,y^n,\hat{z}^n)=\frac{1}{n}\sum_{i=1}^n \log\left( \frac{1}{\hat{z}_i(x_i,y_i)}\right).\label{eqn:dx_expression}
\end{align}

As we will see in Section \ref{sec:proofs}, the distortion measure $d(\cdot)$ measures the amount of soft information that the reproduction symbols provide about the source symbols in such a way that the expected distortion can be described as an entropy.  For example, given the output from a discrete memoryless channel, the minimum distortion between the channel input and output is the conditional entropy.  For this reason, we refer to $d(\cdot)$ as an entropy-based distortion measure.

The function $d(\cdot)$ is a natural distortion measure for practical scenarios. A similar distortion measure has appeared previously in the image processing literature \cite{bib:dist} and in the study of the information bottleneck problem \cite{bib:Tis}. However, it does not appear to have been studied in the context of multiterminal source coding.

A $(2^{nR_x},2^{nR_y},n)$-rate distortion code for the JD network consists of encoding functions,
\begin{align*}
f_{x}&:\mathcal{X}^n\rightarrow\{1,2,\dots,2^{nR_x}\}\\
f_{y}&:\mathcal{Y}^n\rightarrow\{1,2,\dots,2^{nR_y}\},
\end{align*}
and a decoding function
\begin{align*}
g &: \{1,2,\dots,2^{nR_x}\} \times \{1,2,\dots,2^{nR_y}\}  \rightarrow\hat{\mathcal{Z}}^n.
\end{align*}

A vector $(R_x,R_y,D)$ with nonnegative components is achievable for the JD network if there exists a sequence of $(2^{nR_x},2^{nR_y},n)$-rate distortion codes satisfying
\begin{align*}
\lim_{n\rightarrow\infty}\mathbb{E}\left[d(X^n,Y^n,g(f_{x}(X^n),f_{y}(Y^n)) \right]\leq D.
\end{align*}

\begin{definition} The achievable rate distortion region, $\mathcal{R}$, for the JD network  is the closure of the set of all achievable vectors $(R_x,R_y,D)$.
\end{definition}

In a similar manner, we can also consider the case when there is only a distortion constraint on $X$ rather than a joint distortion constraint on $X,Y$.  For this, we consider the reproduction alphabet $\hat{\mathcal{V}}=\Delta_{m}$.  With $\hat{v}$ defined in this way, it will be convenient to use the notation $\hat{v}(x)=q_{x}$ for $x\in\mathcal{X}$.

We define the distortion measure $d_x:\mathcal{X}\times \hat{\mathcal{V}}\rightarrow \mathbb{R}^{+}$ by
\begin{align}
d_x(x,\hat{v})=\log\left( \frac{1}{\hat{v}(x)}\right),
\end{align}
and the corresponding distortion between the sequences $x^n$ and $\hat{v}^n$ as
\begin{align}
d_x(x^n,\hat{v}^n)=\frac{1}{n}\sum_{i=1}^n \log\left( \frac{1}{\hat{v}_i(x_i)}\right).\label{eqn:dx_expression}
\end{align}

Identical to the case for the JD network, we can define a $(2^{nR_x},2^{nR_y},n)$-rate distortion code for the XD network, with the exception that the range of the decoding function $g(\cdot)$ is the reproduction alphabet $\hat{\mathcal{V}}$.

A vector $(R_x,R_y,D_x)$ with nonnegative components is achievable for the XD network if there exists a sequence of $(2^{nR_x},2^{nR_y},n)$-rate distortion codes satisfying
\begin{align*}
\lim_{n\rightarrow\infty}\mathbb{E}\left[d_x(X^n,g(f_{x}(X^n),f_{y}(Y^n)) \right]\leq D_x.
\end{align*}

\begin{definition} The achievable rate distortion region, $\mathcal{R}_x$, for the XD network is the closure of the set of all achievable vectors $(R_x,R_y,D_x)$.
\end{definition}

Our main results are stated in the following theorems:
\begin{theorem} \label{thm:achievableRegion}
\begin{align*}
\mathcal{R}=\left\{ (R_x,R_y,D): \begin{array}{l}
\exists \delta_x,\delta_y \geq 0 \mbox{~such~that}\\
D\geq\delta_x+\delta_y\\
R_x+\delta_x \geq H(X|Y)\\
R_y+\delta_y \geq H(Y|X)\\
R_x+R_y+D \geq H(X,Y)
\end{array}
\right\}\end{align*}
\end{theorem}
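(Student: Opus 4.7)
The proof splits into a converse and an achievability argument.

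For the converse, starting from any $(2^{nR_x}, 2^{nR_y}, n)$-code achieving expected distortion $D$, I would set $J_x = f_x(X^n)$, $J_y = f_y(Y^n)$, and $\hat{Z}^n = g(J_x, J_y)$. Applying Gibbs' inequality coordinatewise together with subadditivity of entropy gives $nD \geq \sum_i H(X_i, Y_i | J_x, J_y) \geq H(X^n, Y^n | J_x, J_y)$. The crucial move is to define $n\delta_x := H(X^n | J_x, Y^n)$ and $n\delta_y := H(Y^n | X^n, J_y)$, both nonnegative. The bound $R_x + \delta_x \geq H(X|Y)$ then follows from $nH(X|Y) = H(J_x | Y^n) + H(X^n | J_x, Y^n) \leq n(R_x + \delta_x)$, and symmetrically for $Y$. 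The sum bound $R_x + R_y + D \geq H(X,Y)$ follows from $nH(X,Y) = H(J_x, J_y) + H(X^n, Y^n | J_x, J_y) \leq n(R_x + R_y + D)$. Finally, since $J_x, J_y$ are deterministic functions of $X^n, Y^n$ respectively, the expansion $H(X^n, Y^n | J_x, J_y) = H(X^n | J_x, J_y) + H(Y^n | X^n, J_y) \geq H(X^n | J_x, J_y, Y^n) + n\delta_y = n(\delta_x + \delta_y)$ delivers $D \geq \delta_x + \delta_y$.

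For achievability I would use a Berger--Tung-style random coding scheme. Observe that the dominant boundary of $\mathcal{R}$ for each $D$ is the segment of the sum-face $R_x + R_y + D = H(X,Y)$ joining the corners $(H(X|Y) - D,\, H(Y),\, D)$ and $(H(X),\, H(Y|X) - D,\, D)$. To reach the first corner (assuming $D \leq H(X|Y)$), I would introduce an auxiliary $U$ satisfying the Markov chain $U - X - Y$ with $H(X | U, Y) = D$; such a $U$ exists by continuity as $P(U | X)$ is varied between $U = X$ (giving $0$) and $U$ constant (giving $H(X|Y)$). The X-encoder generates a codebook of $2^{nI(X;U)}$ i.i.d.\ $P(U)$-sequences, selects $U^n$ jointly typical with $X^n$, and Slepian--Wolf-bins the chosen index at rate $R_x = I(X; U | Y) = H(X|Y) - D$. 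The Y-encoder sends $Y^n$ losslessly at rate $R_y = H(Y)$. The decoder recovers $Y^n$, identifies the unique $U^n$ in its bin jointly typical with $Y^n$, and outputs $\hat{Z}_i(x, y) = P(X = x | U_i, Y_i)\,\mathbf{1}[y = Y_i]$. A symmetric construction with $V - Y - X$ Markov and $H(Y | X, V) = D$ handles the second corner; time-sharing covers the interpolating segment; and rate-slack (together with appropriate degenerate schemes in the $D > H(X|Y)$ and $D > H(Y|X)$ regions) covers the remainder of $\mathcal{R}$.

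The principal obstacle will be the achievability analysis: verifying that the posterior-based reproduction actually yields per-letter distortion $H(X | U, Y) = D$ up to vanishing error. Since the reproduction at each index is a single-letter function of $(U_i, Y_i)$ while the encoder's choice of $U^n$ couples the whole sequence through block-level joint typicality with $X^n$, establishing $\frac{1}{n}\sum_i \mathbb{E}[-\log P(X_i | U_i, Y_i)] \to H(X | U, Y)$ will require a careful typical-set computation. I would also need to specify a bounded-distortion fallback on the low-probability events where the codebook is atypical or Slepian--Wolf decoding fails, so as to control expected (and not just typical) distortion.
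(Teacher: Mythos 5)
Your proposal is correct, but it reaches Theorem \ref{thm:achievableRegion} by a genuinely different route than the paper. The paper proves \emph{both} directions operationally, via an equivalence with a ``Split-Message Slepian--Wolf'' (SMSW) network: for the converse it upgrades any rate-distortion code to a lossless SW code by random binning at extra rates $\approx D_x$ and $\approx D_{y|x}$, using a peak-distortion lemma and cardinality bounds $|\mathcal{A}_x(\hat{z}^n)|\leq 2^{n(D_x+2\epsilon)}$, $|\mathcal{A}_{y|x}(\hat{z}^n)|\leq 2^{n(D_{y|x}+2\epsilon)}$ on the sets of sequences meeting a given distortion level, and then invokes the Slepian--Wolf converse (plus a compactness/subsequence argument to make the per-blocklength distortion splits $\delta_1^{(n)},\delta_2^{(n)}$ converge). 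Your converse is instead a direct single-pass information-theoretic argument: the identifications $n\delta_x=H(X^n|J_x,Y^n)$, $n\delta_y=H(Y^n|X^n,J_y)$ are exactly right, all four chains you write check out (including $D\geq\delta_x+\delta_y$ via $H(X^n,Y^n|J_x,J_y)=H(X^n|J_x,J_y)+H(Y^n|X^n,J_y)$ and dropping $Y^n$ from the first term), and this buys you a shorter, self-contained converse that needs neither the binning construction, nor the peak-distortion and set-size lemmas, nor the compactness step. On achievability the two arguments share the same skeleton --- time-sharing along the sum-face between the corners $(H(X|Y)-D,H(Y))$ and $(H(X),H(Y|X)-D)$, with degenerate variants when $D$ exceeds $H(X|Y)$ or $H(Y|X)$, matching the paper's points $P_1,P_2$ --- but where the paper simply cites its Examples \ref{ex:WZ} and \ref{ex:RD} (Wyner--Ziv and rate-distortion specialized to $d(\cdot)$, themselves justified only by Lemma \ref{lem:ExpectedDistortion}), you re-derive the corner points from scratch with an explicit $U$--$X$--$Y$ auxiliary, binning at rate $I(X;U|Y)=H(X|Y)-D$, and the posterior reproduction $\hat{Z}_i(x,y)=P(X=x|U_i,Y_i)\mathbf{1}[y=Y_i]$; this is more work but makes the mechanism explicit. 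The two loose ends you flag are real but standard: the typical-average computation gives $\frac{1}{n}\sum_i -\log P(X_i|U_i,Y_i)\to H(X|U,Y)$ on the jointly typical event, and since $d(\cdot)$ is unbounded (your indicator assigns $\hat{Z}_i(x,y)=0$ off the decoded $Y$-sequence, so a decoding error yields infinite distortion), you do need the fallback --- mixing the reproduction with the uniform distribution at weight $\alpha$ caps per-letter distortion at $\log(m\ell/\alpha)$ unconditionally while perturbing the typical distortion by $O(\alpha)$, which closes the expected-distortion bound without the decoder having to detect errors; the paper's own proof is no more explicit on this point.
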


\begin{theorem} \label{thm:achievableRegionX}
\begin{align*}
\mathcal{R}_x=\left\{ (R_x,R_y,D_x): \begin{array}{l}
R_x+D_x \geq H(X|U)\\
R_y \geq I(Y;U)\\
\mbox{for some distribution}\\
 p(x,y,u)=p_0(x,y)p(u|y),\\
\mbox{where~} |\mathcal{U}|\leq|\mathcal{Y}|+2.
\end{array}
\right\}\end{align*}
\end{theorem}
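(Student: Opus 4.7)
The region mirrors that of the Ahlswede--K\"{o}rner--Wyner source coding with side information problem, with the key difference being that $X$ is reconstructed lossily under the entropy-based distortion rather than losslessly. The high-level plan is to separate the two encoders' roles: the $Y$-encoder delivers a quantized description $U^n$ (with $U-Y-X$ Markov), while the $X$-encoder together with the distortion budget $D_x$ accounts for the residual uncertainty about $X$ given $U$.

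For achievability, fix any $p(u|y)$ with $U-Y-X$ and pick $(R_x,R_y,D_x)$ satisfying $R_y\geq I(Y;U)$ and $R_x+D_x\geq H(X|U)$. The $Y$-encoder uses the standard WAK codebook construction, transmitting the index of a $U^n$ codeword jointly typical with $Y^n$ so the decoder recovers $U^n$. To cover the full tradeoff $R_x+D_x\geq H(X|U)$ it suffices to time-share two corner operating points for the $X$-encoder: (i) Slepian--Wolf bin $X^n$ at rate $H(X|U)+\epsilon$ using $U^n$ as decoder-side information, recover $X^n$ losslessly, and output $\hat{V}_i$ as the point mass at $X_i$ (distortion $0$); (ii) send nothing and let the decoder output $\hat{V}_i(x)=p(x\mid U_i)$, which achieves expected distortion $H(X|U)$ by direct calculation. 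Interior points of the region follow by wasting rate or absorbing slack into the distortion.

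The converse is the main work. Let $M_x=f_x(X^n)$, $M_y=f_y(Y^n)$. Since $\hat{V}_i\in\Delta_m$ is a function of $(M_x,M_y)$, Gibbs' inequality applied pointwise yields the Fano-type bound
\begin{align*}
n\mathbb{E}[d_x(X^n,\hat{V}^n)]&=\sum_{i=1}^n\mathbb{E}\bigl[-\log\hat{V}_i(X_i)\bigr]\\
&\geq\sum_{i=1}^n H(X_i\mid M_x,M_y)\\
&\geq H(X^n\mid M_x,M_y).
\end{align*}
Combining this with $nR_x\geq I(X^n;M_x\mid M_y)=H(X^n\mid M_y)-H(X^n\mid M_x,M_y)$ yields $n(R_x+D_x)\geq H(X^n\mid M_y)=\sum_i H(X_i\mid M_y,X^{i-1})$. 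Defining $U_i:=(M_y,Y^{i-1})$ and using the i.i.d.\ property to deduce $X^{i-1}\perp X_i\mid (M_y,Y^{i-1})$ gives $H(X_i\mid M_y,X^{i-1})\geq H(X_i\mid U_i)$. The rate constraint on $R_y$ reduces to $nR_y\geq\sum_i I(Y_i;U_i)$ by the standard identity $I(Y_i;M_y\mid Y^{i-1})=I(Y_i;M_y,Y^{i-1})$ valid for i.i.d.\ sources, and one checks directly that $X_i-Y_i-U_i$ is Markov. Introducing a time-sharing $Q$ uniform on $\{1,\dots,n\}$ and setting $U:=(U_Q,Q)$ preserves the Markov chain and produces the single-letter bounds $R_y\geq I(Y;U)$ and $R_x+D_x\geq H(X\mid U)$. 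The cardinality bound $|\mathcal{U}|\leq|\mathcal{Y}|+2$ follows from a standard Carath\'{e}odory argument that preserves $p(y)$, $H(Y\mid U)$, and $H(X\mid U)$.

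The hardest step is the converse, which rests on two observations. First, Gibbs' inequality replaces Fano's inequality because the decoder output $\hat{V}_i$ is itself a probability distribution; this is exactly what lets the distortion term function as a rate surrogate in the sum $R_x+D_x$. Second, the inequality $H(X_i\mid M_y,X^{i-1})\geq H(X_i\mid M_y,Y^{i-1})$ is what allows a single $Y$-side auxiliary $U_i=(M_y,Y^{i-1})$ to simultaneously control both $R_y$ and $R_x+D_x$; it uses memorylessness of $(X^n,Y^n)$ in a nontrivial way.
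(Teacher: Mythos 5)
Your proposal is correct, and your achievability argument is essentially the paper's: both recover $U^n$ via a standard codebook at rate $I(Y;U)$ and time-share the $X$-encoder between Slepian--Wolf binning at rate $H(X|U)$ with $U^n$ as side information (distortion near $0$) and sending nothing (decoder outputs $\hat{V}_i(\cdot)=p(\cdot|U_i)$, distortion $H(X|U)$ by Lemma~\ref{lem:ExpectedDistortion}). Your converse, however, takes a genuinely different route. The paper argues operationally: given any XD code, Lemma~\ref{lem:ConditionalDistortionSetSize} bounds the number of $x^n$ with $d_x(x^n,\hat{v}^n)\leq D_x+\epsilon$ by $2^{n(D_x+2\epsilon)}$, so appending a random bin index of rate $D_x+3\epsilon$ converts the XD code into a lossless Ahlswede--K\"{o}rner--Wyner code of rates $(R_x+D_x+3\epsilon,R_y)$, and the single-letter constraints (including the cardinality bound) are then inherited wholesale from the known AKW converse. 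You instead single-letterize directly: Gibbs' inequality converts the expected log-loss into $\sum_i H(X_i|M_x,M_y)\geq H(X^n|M_x,M_y)$, which combined with $nR_x\geq I(X^n;M_x|M_y)$ gives $n(R_x+D_x)\gtrsim H(X^n|M_y)$, and the identification $U_i=(M_y,Y^{i-1})$ --- justified by $X^{i-1}\perp X_i\mid(M_y,Y^{i-1})$, which I checked and which holds for the memoryless source --- controls both constraints simultaneously; this is the classical WAK converse augmented by the log-loss-to-entropy step (the marginal form of Lemma~\ref{lem:ExpectedDistortion}). What each approach buys: the paper's reduction makes the ``one bit of distortion equals one bit of rate'' correspondence operational and avoids redoing any single-letterization, at the cost of the peak-distortion and set-size machinery plus subsequence technicalities; yours is self-contained, avoids that machinery entirely, and your Carath\'{e}odory count (preserving $p(y)$, $H(Y|U)$, $H(X|U)$) in fact yields the sharper $|\mathcal{U}|\leq|\mathcal{Y}|+1$, consistent with the stated bound. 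One shared loose end worth fixing in your write-up: on the vanishing Slepian--Wolf error event a point-mass $\hat{V}_i$ incurs infinite log-loss, so your corner point (i) (like the paper's lossless branch) should have the decoder output a mixture of the point mass with, say, the uniform distribution on $\mathcal{X}$ to keep the \emph{expected} distortion small; this is a routine repair.
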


Since the distortion measure is reminiscent of discrete entropy, we can think of the units of distortion as ``bits" of distortion.  Thus, Theorem \ref{thm:achievableRegion} states that for every bit of distortion we allow for $X,Y$ jointly, we can remove exactly one bit of required rate from the constraints defining the Slepian-Wolf achievable rate region.  Indeed, we prove the theorem by demonstrating a correspondence between a modified Slepian-Wolf network and the multiterminal source coding problem in question.

Similarly, when we only consider a distortion constraint on $X$, Theorem \ref{thm:achievableRegionX} states that for every bit of distortion we tolerate, we can remove one bit of rate required by the $X$-encoder in the Ahlswede-K\"{o}rner-Wyner region.

The proofs of Theorems \ref{thm:achievableRegion} and \ref{thm:achievableRegionX} are given in the next section.

\section{Proofs}\label{sec:proofs}

We choose to prove Theorems \ref{thm:achievableRegion} and \ref{thm:achievableRegionX} by showing a correspondence between schemes that achieve a prescribed distortion constraint and the well-known lossless distributed source coding scheme of Slepian and Wolf, and the source coding with side-information scheme of Ahlswede, K\"{o}rner, and Wyner.  This provides a great deal of insight into how the various distortions are achieved.

In each case, the proof relies on a peculiar property of the distortion measure $d(\cdot)$.  Namely, the ability to convert expected distortions to entropies that are easily manipulated.  In the following subsection, we discuss the properties of the distortion measure $d(\cdot)$.

\subsection{Properties of $d(\cdot)$}
As stated above, one particularly useful property of $d(\cdot)$ is the ability to convert expected distortions to conditional entropies.  This is stated formally in the following lemma.

\begin{lemma}\label{lem:ExpectedDistortion}
Given any $U$ arbitrarily correlated with $(X^n,Y^n)$, the estimator $\hat{Z}^n[U]$ produces the expected distortion
\begin{align*}
\mathbb{E}\left[d(X^n,Y^n,\hat{Z}^n) \right] \geq \frac{1}{n}\sum_{i=1}^nH(X_i,Y_i|U).
\end{align*}
Moreover, this lower bound can be achieved by setting $\hat{z}_i[u](x,y):=\Pr\left(X_i=x,Y_i=y|U=u\right)$.
\end{lemma}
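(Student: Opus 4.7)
The plan is to reduce the inequality to a per-letter application of Gibbs' inequality (equivalently, non-negativity of KL divergence), after conditioning on the auxiliary variable $U$.

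First, since $\hat{Z}^n$ is constructed as a function of $U$, I would write the $i$-th term of the expected distortion as an iterated expectation:
\begin{align*}
\mathbb{E}\!\left[\log \tfrac{1}{\hat{z}_i[U](X_i,Y_i)}\right]
= \mathbb{E}_U\!\left[\,\mathbb{E}\!\left[\log \tfrac{1}{\hat{z}_i[U](X_i,Y_i)}\,\Big|\, U\right]\right].
\end{align*}
Conditioned on $U=u$, the inner expectation is
\begin{align*}
\sum_{x,y} \Pr(X_i{=}x,Y_i{=}y\mid U{=}u)\,\log\tfrac{1}{\hat{z}_i[u](x,y)},
\end{align*}
which is a cross-entropy between the conditional law $p_{X_i,Y_i\mid U=u}$ and the arbitrary distribution $\hat{z}_i[u](\cdot,\cdot)\in\Delta_{m\times\ell}$.

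Next, I would invoke Gibbs' inequality: for any distributions $p,q$ on the same finite alphabet,
\begin{align*}
\sum p(x,y)\log\tfrac{1}{q(x,y)} \;\geq\; \sum p(x,y)\log\tfrac{1}{p(x,y)},
\end{align*}
with equality iff $q=p$. Applied pointwise in $u$ with $p=p_{X_i,Y_i\mid U=u}$ and $q=\hat{z}_i[u]$, this gives
\begin{align*}
\mathbb{E}\!\left[\log\tfrac{1}{\hat{z}_i[U](X_i,Y_i)}\,\Big|\, U=u\right] \geq H(X_i,Y_i\mid U{=}u).
\end{align*}
Taking expectation over $U$, summing over $i$, and dividing by $n$ yields the claimed inequality. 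The equality condition of Gibbs' inequality immediately shows that choosing $\hat{z}_i[u](x,y)=\Pr(X_i{=}x,Y_i{=}y\mid U{=}u)$ attains the bound, which also ensures the codebook symbols lie in $\Delta_{m\times\ell}$ as required by the reproduction alphabet.

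There is no real obstacle here; the only thing to be careful about is that $\hat{z}_i[u]$ is allowed to be any element of the probability simplex independently of the true conditional law, so the bound is an inequality for generic estimators, and Gibbs' inequality is the exact tool that both proves the lower bound and identifies the minimizer. The key conceptual point worth flagging in the write-up is simply that the logarithmic form of $d(\cdot)$ is what converts the expected distortion into a cross-entropy, making entropy the natural lower bound.
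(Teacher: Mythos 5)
Your proposal is correct and follows essentially the same route as the paper: both condition on $U=u$, recognize the inner expectation as a cross-entropy, and lower-bound it by $H(X_i,Y_i\mid U=u)$ --- your invocation of Gibbs' inequality is just the paper's decomposition $\mathrm{cross\text{-}entropy} = D\bigl(p_i(\cdot\mid u)\,\|\,\hat{z}_i[u]\bigr) + H(X_i,Y_i\mid U=u)$ together with nonnegativity of KL divergence, and both arguments identify the same minimizer $\hat{z}_i[u](x,y)=\Pr(X_i=x,Y_i=y\mid U=u)$ via the equality condition. No gaps; your write-up is, if anything, slightly more explicit about why the minimizer lies in the reproduction alphabet $\Delta_{m\times\ell}$.
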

\begin{proof}
Given any $U$ arbitrarily correlated with $(X^n,Y^n)$, denote the reproduction of $(X^n,Y^n)$ from $U$ as $\hat{Z}^n[U]\in \hat{\mathcal{Z}}^n$.  By definition of the reproduction alphabet, we can consider the estimator $\hat{Z}^n[U]$ to be some probability distribution on $\mathcal{X}\times\mathcal{Y}$ conditioned on $U$.  Then, we obtain the following lower bound on the expected distortion conditioned on $U=u$:
\begin{align*}
\mathbb{E}&\,\left[d(X^n,Y^n,\hat{Z}^n) | U=u \right] \\&=
\frac{1}{n}\sum_{i=1}^n \sum_{x,y\in\mathcal{X}\times\mathcal{Y}} p_i(x,y|u) \log\left(\frac{1}{\hat{z}_i[u](x,y) }\right)\\
&= \frac{1}{n}\sum_{i=1}^n D\left(p_i(x,y|u)||\hat{z}_i[u](x,y)\right)+H(X_i,Y_i|U=u)\\
&\geq \frac{1}{n}\sum_{i=1}^n H(X_i,Y_i|U=u),
\end{align*}
where $p_i(x,y|u)=\Pr\left(X_i=x,Y_i=y|U=u \right)$ is the true conditional distribution.
Averaging both sides over all values of $U$, we obtain the desired result.  Note that the lower bound can always be achieved by setting $\hat{z}_i[u](x,y):=p_i(x,y|u)$.
\end{proof}

We now give two examples which illustrate the utility of the property stated in Lemma \ref{lem:ExpectedDistortion}.

\begin{example}\label{ex:WZ}
Consider the following theorem of Wyner and Ziv \cite{bib:WZ}:
\begin{theorem}
Let $(X,Y)$ be drawn i.i.d. and let $d(x,\hat{z})$ be given.  The rate distortion function with side information is
\begin{align*}
R_Y(D)=\min_{p(w|x)}\min_f I(X;W|Y)
\end{align*}
where the minimization is over all functions $f:\mathcal{Y}\times\mathcal{W}\rightarrow\hat{\mathcal{Z}}$ and conditional distributions $p(w|x)$ such that $\mathbb{E}\left[d(x,f(y,w)) \right]\leq D$.
\end{theorem}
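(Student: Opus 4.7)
The statement is the classical Wyner--Ziv rate-distortion theorem with side information; in the paper it is almost certainly invoked rather than reproved, since the authors only need it as the launching point for applying Lemma \ref{lem:ExpectedDistortion}. Nevertheless, my plan to prove it follows the standard two-step template: a random-binning achievability argument and a single-letterization converse.

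For achievability, I fix a feasible test channel $p(w|x)$ and function $f:\mathcal{Y}\times\mathcal{W}\to\hat{\mathcal{Z}}$ with $\mathbb{E}[d(X,f(Y,W))]\le D$. I then generate a codebook of $2^{n(I(X;W)+\epsilon)}$ sequences $w^n$ drawn i.i.d.\ from the marginal $p(w)=\sum_x p(x)p(w|x)$, partition them into $2^{n(I(X;W|Y)+2\epsilon)}$ bins uniformly at random, encode by locating a $w^n$ jointly typical with $x^n$ and transmitting its bin index (success w.h.p.\ by the covering lemma), and decode by searching the declared bin for the unique codeword jointly typical with $y^n$. Because $W-X-Y$ is Markov, $I(X;W)-I(X;W|Y)=I(Y;W)$, so the expected count of spurious typical codewords in the bin tends to zero by the packing lemma. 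Symbolwise reconstruction $\hat z_i=f(y_i,w_i)$ together with the typical average lemma yields expected distortion at most $D+o(1)$.

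For the converse, let $M=f_x(X^n)$ with $|M|\le 2^{nR}$. I would write
\begin{align*}
nR &\geq H(M)\geq I(X^n;M\mid Y^n)=\sum_{i=1}^n I(X_i;M\mid Y^n,X^{i-1})\\
   &\geq \sum_{i=1}^n I\bigl(X_i;\,M,Y^{i-1},Y_{i+1}^n\,\big|\,Y_i\bigr),
\end{align*}
identify the auxiliary $W_i:=(M,Y^{i-1},Y_{i+1}^n)$, and observe that because the source is memoryless, $(X^{-i},Y^{-i})$ is independent of $(X_i,Y_i)$, forcing the Markov chain $W_i-X_i-Y_i$. Since $\hat Z_i$ is a function of $(M,Y^n)=(W_i,Y_i)$, it takes the required form $f_i(Y_i,W_i)$. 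A final time-sharing step with uniform $Q\in\{1,\dots,n\}$, setting $W=(Q,W_Q)$, $X=X_Q$, $Y=Y_Q$, and $f(y,w)=f_q(y,w_q)$, produces the single-letter bound $R\ge I(X;W\mid Y)$ at distortion $\le D$.

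The main obstacle is the binning analysis in achievability: the decoder must simultaneously identify $w^n$ from within a bin using only the correlated $y^n$, and the Markov property $W-X-Y$ is what makes the codebook rate/bin rate accounting work out to $I(Y;W)$ codewords per bin. The converse is comparatively mechanical once the Wyner--Ziv-style auxiliary $W_i=(M,Y^{i-1},Y_{i+1}^n)$ is identified; its special structure is precisely what yields the correct Markov chain. I note that Lemma \ref{lem:ExpectedDistortion} plays no role in the proof of the stated theorem itself; rather, I expect the example to continue by specializing to the entropy-based $d$, using Lemma \ref{lem:ExpectedDistortion} to collapse the inner minimization over $f$ to a conditional entropy $H(X|Y,W)$, and thereby obtaining a closed form $R_Y(D)=H(X|Y)-D$.
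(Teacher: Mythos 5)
You judged the situation correctly: the paper offers no proof of this statement at all --- it appears inside Example \ref{ex:WZ} purely as a quoted theorem of Wyner and Ziv \cite{bib:WZ}, immediately followed by the specialization $R_Y(D)=H(X|Y)-D$ obtained from Lemma \ref{lem:ExpectedDistortion}, exactly as you anticipated in your closing paragraph. Your reconstruction of the Wyner--Ziv proof is the standard one and is essentially sound: the rate accounting $I(X;W)-I(X;W|Y)=I(Y;W)$ (valid because $I(Y;W|X)=0$ under $W-X-Y$) correctly sizes the bins; in the converse, the auxiliary $W_i=(M,Y^{i-1},Y_{i+1}^n)$, the chain $nR\geq I(X^n;M\mid Y^n)\geq\sum_{i}I(X_i;W_i\mid Y_i)$, the Markov property $W_i-X_i-Y_i$ from memorylessness, the observation that $\hat{Z}_i$ is a function of $(W_i,Y_i)$, and the time-sharing step are all handled correctly (the latter implicitly using $I(X_Q;Q\mid Y_Q)=0$, which holds since $(X_Q,Y_Q)\sim p(x,y)$ independently of $Q$). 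The one step your achievability sketch glosses over is the Markov lemma: you need that the codeword selected for joint typicality with $x^n$ remains jointly typical with the pair $(x^n,y^n)$ with high probability --- this is what guarantees both that the decoder's typicality test accepts the true codeword in the bin and that the typical average lemma applies to bound the distortion; the packing lemma only eliminates spurious codewords. Since the paper delegates the entire proof to \cite{bib:WZ}, there is no conflict between your route and the paper's; your sketch simply supplies the cited argument, and your predicted continuation --- using Lemma \ref{lem:ExpectedDistortion} to collapse the inner minimization over $f$ to $H(X\mid Y,W)$ and hence obtain $R_Y(D)=H(X|Y)-D$ --- is precisely how the paper uses the theorem.
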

For an arbitrary distortion measure, $R_Y(D)$ can be difficult to compute.  In light of Lemma \ref{lem:ExpectedDistortion} and its proof, we immediately see that:
\begin{align*}
R_Y(D)=H(X|Y)-D.
\end{align*}
\end{example}

\begin{example} \label{ex:RD}
As a corollary to the previous example, taking $Y=\emptyset$ we obtain the standard rate distortion function for a source $X^n$:
\begin{align*}
R(D)=H(X)-D.
\end{align*}
\end{example}

In both examples, we make the surprising observation that the distortion function $d(\cdot)$ yields a rate distortion function that is a multiple of the rate distortion function obtained using the ``erasure" distortion measure $d^{\infty}(\cdot)$ defined as follows:
\begin{align}
d^{\infty}(x,\hat{z})= \left\{\begin{array}{ll}
0 & \mbox{if $\hat{z}=x$}\\
\infty & \mbox{if $\hat{z} \neq x$ and $\hat{x} \neq e$}\\
1 & \mbox{if $\hat{z} = e$.} \end{array} \right.
\end{align}
This is somewhat counter-intuitive given the fact that an estimator is able to pass much more ``soft" information to the distortion measure $d(\cdot)$ compared to $d^{\infty}(\cdot)$.  It would be interesting to understand whether or not this relationship holds for general multiterminal networks, however this issue remains open.

\begin{definition}
We have defined $d(\cdot)$ to be a joint distortion measure on $\mathcal{X}\times\mathcal{Y}$, however it is possible to decompose it in a natural way.  We can define the marginal and conditional distortions for $X$ and $Y|X$ respectively by decomposing $\hat{z}_i[u](x,y)=\hat{z}_i(x|u)\hat{z}_i(y|x,u)$ (note the slight abuse of notation).  Thus, if the total expected distortion is less than $D$, we define the marginal and conditional distortions $D_x$, and $D_{y|x}$ as follows:
\begin{align*}
D &\geq \mathbb{E}\left[d(X^n,Y^n,\hat{Z}^n) \right] \\
&=\mathbb{E}\left[d_x(X^n,\hat{Z}^n) \right] + \mathbb{E}\left[d_{y|x}(Y^n,\hat{Z}^n) \right]\\
&:=D_x+D_{y|x}\\
&\geq \frac{1}{n} \sum_{i=1}^nH(X_i|U)+H(Y_i|U,X_i).
\end{align*}
In a complimentary manner, we can decompose the expected distortion into $D_y, D_{x|y}$ satisfying $D \geq D_y+D_{x|y}$.
\end{definition}

The definitions of expected total, marginal, and conditional distortion allow us to bound the number of sequences that are ``distortion-typical".  First, we require a result on peak distortion.

\begin{lemma} \label{lem:peakDistortion}
Suppose we have a sequence of $(2^{nR_x},2^{nR_y},n)$-rate distortion codes satisfying
\begin{align*}
\lim_{n\rightarrow\infty}\mathbb{E}\left[d(X^n,Y^n,g(f_{x}(X^n),f_{y}(Y^n)) \right]\leq D.
\end{align*}
For any $\epsilon>0$, $\Pr\left\{d(X^n,Y^n,\hat{Z}^n)>D+\epsilon\right\}<\epsilon$ for a sufficiently large blocklength $n$.
\end{lemma}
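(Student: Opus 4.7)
The plan is to combine the limit hypothesis with a concentration argument for the empirical average $d_n=\frac{1}{n}\sum_{i=1}^n \log(1/\hat z_i(X_i,Y_i))$. For any $\epsilon>0$, I would first use the hypothesis to choose $n_0$ such that $\mathbb{E}[d_n] \le D+\epsilon/2$ for all $n\ge n_0$; this reduces the claim to establishing a one-sided concentration $\Pr\{d_n > \mathbb{E}[d_n]+\epsilon/2\} < \epsilon$ for all sufficiently large $n$.

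A direct application of Markov's inequality yields only $\Pr\{d_n > D+\epsilon\} \le \mathbb{E}[d_n]/(D+\epsilon)$, which is strictly less than $1$ but not less than $\epsilon$ whenever $D>0$; so a genuine concentration argument is required. I would proceed by truncation. The per-letter distortions $d_i=\log(1/\hat z_i(X_i,Y_i))$ can be unbounded when $\hat z_i(X_i,Y_i)$ is tiny, so I would split
\[
\Pr\{d_n > D+\epsilon\} \le \Pr\{\exists i : d_i > B_n\} + \Pr\{d_n > D+\epsilon,\ \max_i d_i \le B_n\}
\]
for a truncation threshold $B_n$. A union-bound Markov estimate gives $\Pr\{\exists i : d_i > B_n\} \le n\,\mathbb{E}[d_n]/B_n$, which is small provided $B_n$ grows fast enough. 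On the typical event, per-letter distortions are bounded by $B_n$, and one can invoke either a bounded-differences (McDiarmid) inequality applied to $d_n$ as a function of $(X^n,Y^n)$, or a Chebyshev-style bound based on a variance estimate for the truncated average.

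The main obstacle is that the per-letter distortions are not independent: each $\hat z_i$ depends on the full source sequences $(X^n,Y^n)$ through the messages $f_x(X^n), f_y(Y^n)$, so a single source-letter perturbation can in principle alter every reproduction symbol simultaneously. A naive bounded-differences constant of order $B_n$ per coordinate yields too weak a McDiarmid bound to be reconciled with the truncation constraint. A cleaner remedy is to condition on the realized pair of messages, which jointly take at most $2^{n(R_x+R_y)}$ values: given the messages, the reproduction sequence $\hat z_1,\ldots,\hat z_n$ is deterministic, and the remaining randomness lies solely in the conditional distribution of $(X^n,Y^n)$ given those messages. One then establishes concentration conditionally, marginalizes, and chooses $B_n$ to balance the truncation error against the conditional concentration exponent, yielding $\Pr\{d_n > D+\epsilon\} < \epsilon$ for $n$ sufficiently large.
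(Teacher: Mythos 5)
Your proposal has a genuine gap: the statement you are trying to prove along the way --- that for a \emph{fixed} code at blocklength $n$ the realized distortion concentrates around its expectation --- is false in general, so no choice of truncation level or concentration inequality can close the argument. The reproductions $\hat{z}_i$ are functions of the messages, and the \emph{conditional} mean of $d_n$ given the messages can sit a constant above $D$ on an event of constant probability. Concretely: let $X$ be uniform binary (no $Y$), let the encoder send the single bit $b=X_1$, and for $i\geq 2$ let the decoder output $\hat{z}_i=(1/2,1/2)$ when $b=0$ and $\hat{z}_i=(1/8,7/8)$ when $b=1$. Given $b=0$ the distortion is exactly $1$ per letter; given $b=1$ it concentrates near $\frac{1}{2}\left(3+\log\frac{8}{7}\right)\approx 1.6$; the unconditional mean is $D\approx 1.3$, so $\Pr\{d_n>D+\epsilon\}\rightarrow 1/2$ for small $\epsilon$. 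Your final step, ``establish concentration conditionally, then marginalize,'' founders exactly here: conditional concentration is around the conditional mean, and marginalizing does not pull it back to $D$. (Two further technical obstacles: even given the messages, the pairs $(X_i,Y_i)$ are not conditionally independent across $i$, so McDiarmid does not apply in the form you want; and with $2^{n(R_x+R_y)}$ message pairs, a union bound over messages would demand exponentially strong conditional tail bounds that are unavailable.)

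The paper's proof sidesteps all of this with a blocking argument: fix $n$ with $\mathbb{E}\left[d(X^n,Y^n,\hat{Z}^n)\right]<D+\epsilon/2$ and concatenate $N$ independent uses of that code into a code of blocklength $Nn$. The per-block distortions are then i.i.d.\ with finite mean at most $D+\epsilon/2$ (finiteness is forced by the hypothesis), and the weak law of large numbers gives $\Pr\{d(X^{Nn},Y^{Nn},\hat{Z}^{Nn})>D+\epsilon\}<\epsilon$ for $N$ large. Note what this actually establishes: the peak-distortion property holds not for the originally given sequence of codes (my counterexample above shows it need not) but for a \emph{modified} sequence --- the repetitions --- which has the same rates and the same asymptotic expected distortion, and that is all the downstream claims require. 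Your diagnosis that Markov's inequality is insufficient was correct, but the missing idea is to \emph{manufacture} independence by repeating the code across blocks rather than to hunt for concentration inside a single block, where none is guaranteed.
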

\begin{proof}
Suppose a length $n$ code satisfies the expected distortion constraint $\mathbb{E}\left[d(X^n,Y^n,\hat{Z}^n)\right]<D+\epsilon/2$.  By repeating the code $N$ times, we obtain $N$ i.i.d. realizations of $(X^n,Y^n,\hat{Z}^n)\sim p(X^n,Y^n,\hat{Z}^n)$.  By the weak law of large numbers:
\begin{align*}
\Pr\left\{d(X^{Nn},Y^{Nn},\hat{Z}^{Nn})>D+\epsilon\right\}<\epsilon
\end{align*}
for $N$ sufficiently large.
\end{proof}

Now, we take a closer look at the sets of source sequences that produce a given distortion.

\begin{lemma} \label{lem:DistortionSetSize}
Let $\mathcal{A}({\hat{z}^n})=\{(x^n,y^n):d(x^n,y^n,\hat{z}^n)\leq D+\epsilon\}$ for some $\epsilon>0$.  The size of $\mathcal{A}({\hat{z}^n})$ is bounded from above by $|\mathcal{A}({\hat{z}^n})|\leq 2^{n(D+2\epsilon)}$.
\end{lemma}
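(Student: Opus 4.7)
The plan is to exploit the fact that each $\hat{z}_i$ is, by construction, a probability distribution on $\mathcal{X}\times\mathcal{Y}$, so the product $\prod_{i=1}^n \hat{z}_i(x_i,y_i)$ defines a probability mass function on $(\mathcal{X}\times\mathcal{Y})^n$. This lets us convert the distortion constraint into a lower bound on a ``probability,'' after which a standard union/counting argument caps the number of sequences.

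First I would rewrite the membership condition. If $(x^n,y^n)\in\mathcal{A}(\hat{z}^n)$, then by definition of $d(\cdot)$ in \eqref{eqn:dx_expression},
\begin{align*}
\frac{1}{n}\sum_{i=1}^n \log\!\left(\frac{1}{\hat{z}_i(x_i,y_i)}\right)\leq D+\epsilon,
\end{align*}
which is equivalent to
\begin{align*}
\prod_{i=1}^n \hat{z}_i(x_i,y_i)\;\geq\; 2^{-n(D+\epsilon)}.
\end{align*}

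Next I would invoke the ``sum of probabilities is at most one'' argument. Since $\hat{z}_i\in\Delta_{m\times\ell}$ for each $i$, the function $Q(x^n,y^n):=\prod_{i=1}^n \hat{z}_i(x_i,y_i)$ satisfies $\sum_{(x^n,y^n)\in(\mathcal{X}\times\mathcal{Y})^n}Q(x^n,y^n)=1$. Consequently,
\begin{align*}
1\;\geq\;\sum_{(x^n,y^n)\in\mathcal{A}(\hat{z}^n)} Q(x^n,y^n)\;\geq\; |\mathcal{A}(\hat{z}^n)|\cdot 2^{-n(D+\epsilon)},
\end{align*}
which rearranges to $|\mathcal{A}(\hat{z}^n)|\leq 2^{n(D+\epsilon)}\leq 2^{n(D+2\epsilon)}$, as claimed. (The weaker constant $2\epsilon$ in the stated bound presumably leaves room for subsequent typicality-style manipulations.)

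There is no serious obstacle here: once one notices that the reproduction alphabet is the simplex, the distortion value at a sequence is essentially the negative log-likelihood under the product distribution $Q$, and the bound is the measure-theoretic analogue of the classical fact that at most $2^{nH}$ sequences can be ``typical'' for an $n$-letter distribution. The only point requiring any care is keeping track of the fact that $Q$ need not be a product of identical factors, so ``typicality'' in the usual sense is unavailable; but this is irrelevant because the counting bound uses only that $Q$ is a pmf, not that it is i.i.d.
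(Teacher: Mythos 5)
Your proof is correct and follows essentially the same route as the paper: both convert the distortion constraint into the bound $\prod_{i=1}^n \hat{z}_i(x_i,y_i)\geq 2^{-n(D+\epsilon)}$ and then use the fact that the product of the $\hat{z}_i$'s is a probability measure on $\mathcal{X}^n\times\mathcal{Y}^n$ to count the sequences. You even match the paper's (mild) slack: its own argument also establishes the sharper bound $2^{n(D+\epsilon)}$ while stating $2^{n(D+2\epsilon)}$ in the lemma.
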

\begin{proof}
For each $(x^n,y^n)\in \mathcal{A}({\hat{z}^n})$, we can rearrange (\ref{eqn:dx_expression}) to obtain
\begin{align}
1\leq 2^{n(D+\epsilon)} \prod_{i=1}^n \hat{z}_i(x_i,y_i).\label{eqn:dx_expressionforContradiction}
\end{align}
By the definition of $\hat{z}^n$, observe that $\prod_{i=1}^n \hat{z}_i(x_i,y_i)$ is a valid probability measure on $\mathcal{X}^n\times\mathcal{Y}^n$.  Thus, for any subset $\mathcal{S}\subseteq \mathcal{X}^n\times\mathcal{Y}^n$, we have
\begin{align}
\sum_{(x^n,y^n)\in \mathcal{S}} \prod_{i=1}^n \hat{z}_i(x_i,y_i) \leq 1. \label{eqn:subset}
\end{align}
Combining (\ref{eqn:dx_expressionforContradiction}) and (\ref{eqn:subset}) gives the desired result:
\begin{align*}
|\mathcal{A}({\hat{z}^n})|&=\sum_{(x^n,y^n)\in \mathcal{A}({\hat{z}^n})}  1\\
&\leq \sum_{(x^n,y^n)\in \mathcal{A}({\hat{z}^n})}  2^{n(D+\epsilon)} \prod_{i=1}^n \hat{z}_i(x_i,y_i)\\
&\leq 2^{n(D+\epsilon)}.
\end{align*}
\end{proof}

We can also modify the previous result to include sequences which satisfy marginal and conditional distortion constraints.
\begin{lemma} \label{lem:ConditionalDistortionSetSize}
Let $\mathcal{A}_x({\hat{z}^n})=\{x^n:d_x(x^n,\hat{z}^n)\leq D_x+\epsilon\}$ and $\mathcal{A}_{y|x}({\hat{z}^n})=\{y^n:d_{y|x}(y^n,\hat{z}^n)\leq D_{y|x}+\epsilon\}$ for some $\epsilon>0$.  The sizes of these sets are bounded as follows:
\begin{align*}
|\mathcal{A}_x({\hat{z}^n})|&\leq 2^{n(D_x+2\epsilon)}, \quad \mbox{and}\\
|\mathcal{A}_{y|x}({\hat{z}^n})|&\leq 2^{n(D_{y|x}+2\epsilon)}
\end{align*}
for sufficiently large $n$.  Symmetric statements hold for $\mathcal{A}_y({\hat{z}^n})$ and $\mathcal{A}_{x|y}({\hat{z}^n})$.
\end{lemma}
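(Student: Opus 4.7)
The plan is to mimic the proof of Lemma \ref{lem:DistortionSetSize} essentially verbatim, exploiting the fact that the marginal and conditional versions of a valid joint probability mass function on $\mathcal{X}\times\mathcal{Y}$ are themselves valid probability mass functions (on $\mathcal{X}$ and on $\mathcal{Y}$, respectively). Concretely, I would first extract, from each $\hat{z}_i\in\Delta_{m\times\ell}$, the induced marginal $\hat{z}_i(x):=\sum_{y\in\mathcal{Y}}\hat{z}_i(x,y)$ and the induced conditional $\hat{z}_i(y|x):=\hat{z}_i(x,y)/\hat{z}_i(x)$, which are exactly the quantities appearing in the decomposition introduced in the definition preceding the lemma. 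These will play the role that the joint $\hat{z}_i(x,y)$ played in Lemma \ref{lem:DistortionSetSize}.

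For $\mathcal{A}_x(\hat{z}^n)$, I would rearrange the definition of $d_x$ to write, for every $x^n\in\mathcal{A}_x(\hat{z}^n)$,
\begin{align*}
1\leq 2^{n(D_x+\epsilon)}\prod_{i=1}^n \hat{z}_i(x_i),
\end{align*}
and then observe that $\prod_i\hat{z}_i(x_i)$ is a valid probability measure on $\mathcal{X}^n$ because each $\hat{z}_i(\cdot)$ is a probability distribution on $\mathcal{X}$. Summing the above inequality over $x^n\in\mathcal{A}_x(\hat{z}^n)$ and using that the total mass is at most one yields $|\mathcal{A}_x(\hat{z}^n)|\leq 2^{n(D_x+\epsilon)}$, which is stronger than the stated bound (the extra $\epsilon$ in the exponent is slack that we can freely absorb, matching the phrasing of Lemma \ref{lem:DistortionSetSize}).

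For $\mathcal{A}_{y|x}(\hat{z}^n)$, I would do the same computation with the conditional $\hat{z}_i(y|x_i)$ in place of $\hat{z}_i(x_i)$, treating $x^n$ as fixed. For each $i$, $\hat{z}_i(\cdot|x_i)$ is a distribution on $\mathcal{Y}$, so $\prod_i \hat{z}_i(y_i|x_i)$ is a probability measure on $\mathcal{Y}^n$, and the identical sum-and-bound argument gives $|\mathcal{A}_{y|x}(\hat{z}^n)|\leq 2^{n(D_{y|x}+\epsilon)}$. The symmetric statements for $\mathcal{A}_y$ and $\mathcal{A}_{x|y}$ follow by swapping the roles of $X$ and $Y$.

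The one subtlety, which I view as the main (mild) obstacle, is well-definedness of the conditional $\hat{z}_i(y|x_i)$ when the induced marginal $\hat{z}_i(x_i)$ vanishes. This is not really an issue because any $y^n$ lying in $\mathcal{A}_{y|x}(\hat{z}^n)$ necessarily satisfies $\hat{z}_i(y_i|x_i)>0$ for all $i$ (otherwise $d_{y|x}(y^n,\hat{z}^n)=\infty>D_{y|x}+\epsilon$), so the conditional is strictly positive at every index that contributes to the sum; indices with $\hat{z}_i(x_i)=0$ can be assigned any fixed distribution $\hat{z}_i(\cdot|x_i)$ without affecting the argument. This justifies the decomposition and completes the proof. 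The ``sufficiently large $n$'' hypothesis in the statement is not actually needed for this combinatorial bound—it is a harmless holdover from the typicality-style phrasing used elsewhere in the paper.
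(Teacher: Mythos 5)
Your proposal is correct and takes exactly the approach the paper intends: the paper omits this proof as ``nearly identical'' to that of Lemma~\ref{lem:DistortionSetSize}, and your argument is precisely that adaptation---rearranging the distortion constraint into $1\leq 2^{n(D_x+\epsilon)}\prod_i\hat{z}_i(x_i)$ (resp.\ the conditional version with $x^n$ fixed) and summing the induced product probability measure over the set. Your side remarks are also sound: the well-definedness issue when $\hat{z}_i(x_i)=0$ is handled correctly, and the ``sufficiently large $n$'' hypothesis is indeed superfluous for this purely combinatorial bound (as it is in Lemma~\ref{lem:DistortionSetSize} itself, where the paper likewise proves the stronger exponent $D+\epsilon$).
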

\begin{proof}
The proof is nearly identical to that of Lemma \ref{lem:DistortionSetSize} and is therefore omitted.
\end{proof}

\subsection{Proof of Theorem \ref{thm:achievableRegion} }
As mentioned previously, we prove Theorem \ref{thm:achievableRegion} by demonstrating a correspondence between the JD network with a joint distortion constraint and a Slepian-Wolf network.  To this end, we now define a modified Slepian-Wolf code.  Essentially the code splits the rates of each user into two parts.  We refer to this network as the Split-Message Slepian-Wolf (SMSW) network.

A $(2^{nR_x},2^{nR_y},2^{nR_{0x}},2^{nR_{0y}},n)$-SW (Slepian-Wolf) code for the SMSW network consists of encoding functions,
\begin{align*}
\phi_{x}&:\mathcal{X}^n\rightarrow\{1,2,\dots,2^{nR_x}\}\\
\phi_{y}&:\mathcal{Y}^n\rightarrow\{1,2,\dots,2^{nR_y}\}\\
\psi_x &:\mathcal{X}^n\rightarrow \{1,2,\dots,2^{n\delta_1}\}\\
\psi_y &:\mathcal{Y}^n\rightarrow \{1,2,\dots,2^{n\delta_2}\},
\end{align*}
and a decoding function
\begin{align*}
\chi &: [2^{nR_x}] \times [2^{nR_y}] \times [2^{n\delta_1}] \times [2^{n\delta_2}] \rightarrow \mathcal{X}^n\times \mathcal{Y}^n.
\end{align*}

A vector $(R_x,R_y,\delta_1,\delta_2)$ with nonnegative components is achievable for the SMSW network if there exists a sequence of $(2^{nR_x},2^{nR_y},2^{n\delta_1},2^{n\delta_2},n)$-SW codes satisfying
\begin{align*}
\lim_{n\rightarrow\infty} \Pr\left\{ (X^n,Y^n) \neq \chi(\phi_x,\phi_y,\psi) \right\}=0.
\end{align*}

\begin{definition} The achievable region, $\mathcal{R}_{SW}$, for the SMSW network  is the closure of the set of all achievable vectors $(R_x,R_y,\delta_1,\delta_2)$.
\end{definition}

\begin{theorem}[\cite{bib:SW}]\label{thm:SW}
The achievable rate region $\mathcal{R}_{SW}$ consists of all rate tuples $(R_x,R_y,\delta_1,\delta_2)$ satisfying
\begin{align*}
R_x+\delta_1&\geq H(X|Y)\\
R_y+\delta_2&\geq H(Y|X)\\
R_x+R_y+\delta_1+\delta_2 &\geq H(X,Y).\\
\end{align*}
\end{theorem}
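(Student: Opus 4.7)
The plan is to reduce Theorem \ref{thm:SW} to the classical Slepian-Wolf theorem by observing that the SMSW setup is nothing more than a relabeling of a standard two-encoder Slepian-Wolf network in which each encoder's output index has been artificially split into two sub-indices. Concretely, for any SMSW code, the pair $(\phi_x, \psi_x)$ can be viewed as a single encoding map $\tilde{\phi}_x : \mathcal{X}^n \to \{1,\dots,2^{n(R_x+\delta_1)}\}$ taking values in the product alphabet, and similarly $(\phi_y, \psi_y)$ defines a single encoder at rate $R_y + \delta_2$. The decoder $\chi$ acts on these two combined indices. Hence the SMSW achievable region is in natural bijection with the classical Slepian-Wolf region under the substitution $R_x \leftrightarrow R_x + \delta_1$ and $R_y \leftrightarrow R_y + \delta_2$.

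For achievability, I would fix any $(R_x, R_y, \delta_1, \delta_2)$ satisfying the three inequalities in the theorem. Setting $R'_x := R_x + \delta_1$ and $R'_y := R_y + \delta_2$, the pair $(R'_x, R'_y)$ lies in the standard Slepian-Wolf region, so by the Slepian-Wolf theorem there exists a sequence of $(2^{nR'_x}, 2^{nR'_y}, n)$ codes $(\tilde{\phi}_x, \tilde{\phi}_y, \tilde{\chi})$ achieving vanishing error probability. I would then define $\phi_x$ and $\psi_x$ by partitioning the output index of $\tilde{\phi}_x$ into its $\lceil nR_x \rceil$ most significant bits and its $\lceil n\delta_1 \rceil$ least significant bits, and likewise for $\phi_y, \psi_y$; the decoder $\chi$ simply concatenates these bits to reconstruct the inputs to $\tilde{\chi}$.

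For the converse, I would start from any sequence of SMSW codes with vanishing error probability and form the combined encoders $\tilde{\phi}_x = (\phi_x, \psi_x)$ and $\tilde{\phi}_y = (\phi_y, \psi_y)$ together with the decoder $\tilde{\chi} = \chi$, obtaining a Slepian-Wolf code at rates $(R_x+\delta_1, R_y+\delta_2)$ with vanishing error probability. Applying the classical Slepian-Wolf converse then yields the three required inequalities.

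I do not anticipate a genuine obstacle here: once the split-index reinterpretation is made explicit, both directions are immediate invocations of \cite{bib:SW}, and the only care required is the bookkeeping in partitioning the Slepian-Wolf bin index into two sub-indices with the correct cardinalities (up to integer rounding that is absorbed in the blocklength limit).
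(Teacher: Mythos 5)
Your proposal is correct and matches the paper's treatment: the paper offers no written proof for Theorem \ref{thm:SW}, simply attributing it to \cite{bib:SW}, and the implicit argument is precisely your observation that merging each encoder's two indices (and splitting them back for achievability) makes the SMSW network a relabeled Slepian-Wolf network at combined rates $R_x+\delta_1$ and $R_y+\delta_2$. Your explicit bookkeeping with the bit-partitioning of the bin index is a harmless elaboration of the same reduction, not a different route.
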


\begin{claim}
If $(R_x,R_y,D)$ is an achievable rate-distortion vector for the JD network, then $(R_x,R_y,\delta_1,\delta_2)$ is an achievable rate vector for the SMSW network for some $\delta_1,\delta_2\geq 0$ such that $\delta_1+\delta_2\leq D$.
\end{claim}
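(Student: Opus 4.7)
The plan is to convert any JD-achievable code into an SMSW-achievable code by keeping the JD encoders unchanged and appending two small random-binning refinement encoders that patch the residual ambiguity about $(X^n,Y^n)$ left once the decoder has recovered $\hat{Z}^n$. Concretely, given a $(2^{nR_x},2^{nR_y},n)$ JD code $(f_x,f_y,g)$ with expected distortion at most $D+\epsilon/2$, I set $\phi_x:=f_x$ and $\phi_y:=f_y$, and let $\psi_x$, $\psi_y$ be independent uniform random binnings of $\mathcal{X}^n$ and $\mathcal{Y}^n$ into $2^{n\delta_1}$ and $2^{n\delta_2}$ bins, respectively. The SMSW decoder first runs the JD decoder to obtain $\hat{Z}^n=g(\phi_x,\phi_y)$, then looks for the unique $\hat{x}^n$ in bin $\psi_x(X^n)$ with $d_x(\hat{x}^n,\hat{Z}^n)\leq D_x+\epsilon$, and then for the unique $\hat{y}^n$ in bin $\psi_y(Y^n)$ with $d_{y|x}(\hat{y}^n,\hat{Z}^n)\leq D_{y|x}+\epsilon$ given $\hat{x}^n$. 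Here $D_x$ and $D_{y|x}$ are the expected marginal and conditional distortions of the chosen JD code, satisfying $D_x+D_{y|x}\leq D+\epsilon/2$ by the decomposition discussed just after Lemma \ref{lem:peakDistortion}.

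For the error analysis, I pick $\delta_1=D_x+3\epsilon$ and $\delta_2=D_{y|x}+3\epsilon$. The total error is dominated by three events: (i) the source is atypical, i.e.\ $X^n\notin \mathcal{A}_x(\hat{Z}^n)$ or $Y^n\notin\mathcal{A}_{y|x}(\hat{Z}^n)$; (ii) some other sequence in $\mathcal{A}_x(\hat{Z}^n)$ lands in the same $X$-bin as $X^n$; and (iii) some other sequence in $\mathcal{A}_{y|x}(\hat{Z}^n)$ lands in the same $Y$-bin as $Y^n$. For (i), I would apply the repetition/weak-law argument of Lemma \ref{lem:peakDistortion} separately to the entropy-based distortion measures $d_x$ and $d_{y|x}$ to obtain peak versions of the distortion bounds, which drives this probability below $\epsilon$ for sufficiently large $n$. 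For (ii) and (iii), Lemma \ref{lem:ConditionalDistortionSetSize} yields $|\mathcal{A}_x(\hat{Z}^n)|\leq 2^{n(D_x+2\epsilon)}$ and $|\mathcal{A}_{y|x}(\hat{Z}^n)|\leq 2^{n(D_{y|x}+2\epsilon)}$, so that averaging over the random binning the expected number of confusable sequences in each bin is at most $2^{-n\epsilon}\to 0$. A standard selection argument then fixes deterministic binning functions for which the overall error probability vanishes.

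Since the construction gives $\delta_1+\delta_2=D_x+D_{y|x}+6\epsilon\leq D+7\epsilon$, letting $\epsilon\downarrow 0$ and invoking the closure in the definition of $\mathcal{R}_{SW}$ produces achievable vectors $(R_x,R_y,\delta_1,\delta_2)$ with $\delta_1+\delta_2\leq D$, as required. Note that both $\delta_1,\delta_2\geq 0$ automatically, since $D_x,D_{y|x}\geq 0$.

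The step I expect to require the most care is justifying the peak-distortion statements for the decomposed measures $d_x$ and $d_{y|x}$ on the reused JD code. The decomposition $\hat{z}_i(x,y)=\hat{z}_i(x)\hat{z}_i(y|x)$ and hence $d_x,d_{y|x}$ depend on the realized $\hat{Z}^n$, so peak bounds do not come directly from Lemma \ref{lem:peakDistortion}. However, since the repetition trick produces i.i.d.\ copies of the joint tuple $(X^n,Y^n,\hat{Z}^n)$, the same weak-law argument applies verbatim to each of $d_x$ and $d_{y|x}$ separately; this is a technical adaptation rather than a substantive obstacle.
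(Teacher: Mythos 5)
Your proposal is correct and follows essentially the same route as the paper: identical encoders $\phi_x=f_x$, $\phi_y=f_y$, random binning at rates $D_x+3\epsilon$ and $D_{y|x}+3\epsilon$, sequential decoding of $X^n$ then $Y^n$ using the distortion-typical sets $\mathcal{A}_x(\hat{Z}^n)$ and $\mathcal{A}_{y|x}(\hat{Z}^n)$, and the same three error events bounded via Lemmas \ref{lem:peakDistortion} and \ref{lem:ConditionalDistortionSetSize}. The only differences are cosmetic: you spell out the adaptation of the peak-distortion argument to the decomposed measures $d_x$ and $d_{y|x}$ (which the paper glosses by citing Lemma \ref{lem:peakDistortion} directly), while the paper explicitly resolves the $n$-dependence of $(\delta_1^{(n)},\delta_2^{(n)})$ by extracting a convergent subsequence from the compact set $[0,D+\epsilon]^2$, a step your appeal to the closure of $\mathcal{R}_{SW}$ covers only implicitly.
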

\begin{proof}
Suppose we have a sequence of $(2^{nR_x},2^{nR_y},n)$-rate distortion codes satisfying
\begin{align*}
\lim_{n\rightarrow\infty}\mathbb{E}\left[d(X^n,Y^n,g(f_{x}(X^n),f_{y}(Y^n)) \right]\leq D.
\end{align*}  From these codes, we will construct a sequence of $(2^{nR_x},2^{nR_y},2^{n\delta_1^{(n)}},2^{n\delta_2^{(n)}},n)$-SW codes satisfying $\lim_{n\rightarrow\infty}\delta_1^{(n)}+\delta_2^{(n)}  \leq D$ and
\begin{align*}
\lim_{n\rightarrow\infty} \Pr\left\{ (X^n,Y^n) \neq \chi(\phi_x,\phi_y,\psi_x,\psi_y) \right\}=0.
\end{align*}

The encoding procedure is almost identical to the rate distortion encoding procedure.  In particular, set $\phi_x(X^n)=f_x(X^n)$ and $\phi_y(Y^n)=f_y(Y^n)$.  Decompose the expected joint distortion into the marginal and conditional distortions $D_x$,$D_{y|x}$ which must satisfy $D_x+D_{y|x}\leq D+\epsilon$ by definition.

Define the remaining encoding functions $\psi_x, \psi_y$ as follows:  Bin the $X^n$ sequences randomly into $2^{n(D_x+3\epsilon)}$ bins and, upon observing the source sequence $X^n$, set $\psi_x(X^n)=b_x(X^n)$ (where $b_x(X^n)$ is the bin index of $X^n$).  Similarly, bin the $Y^n$ sequences randomly into $2^{n(D_{y|x}+3\epsilon)}$ bins and, upon observing the source sequence $Y^n$, set $\psi_y(Y^n)=b_y(Y^n)$ (where $b_y(Y^n)$ is the bin index of $Y^n$).

The decoder finds the unique $\hat{X}^n$ in bin $b_x(X^n)$ satisfying $d_x(\hat{X}^n,\hat{Z}^n)<D_x+\epsilon$.  If $\hat{X}^n\neq X^n$, an error occurs.  Upon successfully recovering $X^n=\hat{X}^n$, the decoder finds the unique $\hat{Y}^n$ such that $d_{y|x}(\hat{Y}^n,\hat{Z}^n)<D_{y|x}+\epsilon$.  If $\hat{Y}^n\neq Y^n$, an error occurs.

The various sources of error are the following:
\begin{enumerate}
\item An error occurs if $d_x(X^n,g(\phi_x,\phi_y))>D_x+\epsilon$ or $d_{y|x}(Y^n,g(\phi_x,\phi_y))>D_{y|x}+\epsilon$.  By Lemma \ref{lem:peakDistortion}, this type of error occurs with probability at most $\epsilon$.
\item An error occurs if there is some other $\tilde{X}^n \neq X^n$ in bin $b_x(X^n)$ satisfying $d_x(\tilde{X}^n,g(\phi_x,\phi_y))<D_x+\epsilon$.  By Lemma \ref{lem:DistortionSetSize} and the observation that that $\Pr\left\{\tilde{X}^n\in \mbox{~bin~} b_x(X^n) \right\}=2^{-n(D_x+3\epsilon)}$, this type of error occurs with arbitrarily small probability.
\item An error occurs if there is some other $\tilde{Y}^n \neq Y^n$ in bin $b_y(Y^n)$ satisfying $d_{y|x}(\tilde{Y}^n,g(\phi_x,\phi_y))<D_{y|x}+\epsilon$.  By Lemma \ref{lem:DistortionSetSize} and the observation that that $\Pr\left\{\tilde{Y}^n\in \mbox{~bin~} b_y(Y^n) \right\}=2^{-n(D_{y|x}+3\epsilon)}$, this type of error is also small.
\end{enumerate}
At this point the proof is essentially complete, but there is a minor technical difficulty dealing with the sequences $\left\{\delta_1^{(n)},\delta_2^{(n)} \right\}_{n=1}^{\infty}$ corresponding to the sequences of marginal and conditional distortions computed from $\hat{Z}^n$ for each $n$.  We require that there exists some $\delta_1$ such that $\delta_1^{(n)}\rightarrow \delta_1$ and similarly for the sequence of $\delta_2^{(n)}$'s.  However, since $[0,D+\epsilon]\times[0,D+\epsilon]$ is compact, we can find a convergent subsequence $\left\{\delta_1^{(n_j)},\delta_2^{(n_j)} \right\}_{j=1}^{\infty}$ so that the desired limits exist.
\end{proof}

\begin{claim}
If $(R_x,R_y,\delta_1,\delta_2)$ is an achievable rate vector for the SMSW network, then $(R_x,R_y,\delta_1+\delta_2)$ is an achievable rate distortion vector for the JD network.
\end{claim}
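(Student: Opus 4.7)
Plan: I prove Claim~2 by constructing the JD code as a time-sharing mixture of a short list of elementary achievable schemes. The SMSW hypothesis enters only through the defining inequalities of $\mathcal{R}_{SW}$ in Theorem~\ref{thm:SW}, namely $R_x+\delta_1 \geq H(X|Y)$, $R_y+\delta_2 \geq H(Y|X)$, and $R_x+R_y+\delta_1+\delta_2 \geq H(X,Y)$; these guarantee the existence of a time-sharing decomposition that realizes the target triple $(R_x,R_y,\delta_1+\delta_2)$ as a JD rate-distortion point.

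The elementary schemes are all direct applications of Lemma~\ref{lem:ExpectedDistortion} with specific choices of auxiliary random variable $U$: (i)~$U := X^n$ via lossless transmission of $X^n$, giving rates $(H(X),0)$, distortion $H(Y|X)$, and reproduction $\hat{z}_i(x,y) = \mathbf{1}(x = X_i)\,p(y|x)$; (ii)~the symmetric $U := Y^n$, with rates $(0,H(Y))$ and distortion $H(X|Y)$; (iii)--(iv)~the two Slepian--Wolf corners $U := (X^n,Y^n)$, with rates $(H(X|Y),H(Y))$ and $(H(X),H(Y|X))$ and zero distortion; and (v)~$U := \emptyset$, with rates $(0,0)$, distortion $H(X,Y)$, and reproduction $\hat{z}_i(x,y) = p(x,y)$. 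Each point is verified directly from Lemma~\ref{lem:ExpectedDistortion}: the decoder outputs the optimal product reproduction prescribed by the lemma, and the resulting conditional entropy matches the claimed distortion.

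Given any $(R_x,R_y,\delta_1,\delta_2) \in \mathcal{R}_{SW}$, the next step is to exhibit nonnegative weights $\alpha_1,\dots,\alpha_5$ summing to one such that a suitable weighted combination of the elementary rate-distortion triples equals $(R_x,R_y,\delta_1+\delta_2)$, allowing any excess rate or distortion to be absorbed by inflation (any JD code that meets a given rate/distortion constraint also meets any looser one). I would then partition the blocklength into consecutive blocks of sizes proportional to $\alpha_1,\dots,\alpha_5$ and apply scheme~($j$) on the $j$th block; since rates and distortions are additive across disjoint blocks, the concatenated code achieves $(R_x,R_y,\delta_1+\delta_2)$ for the JD network, giving the claim via the convexity of achievability under time-sharing.

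Main obstacle: The principal technical step is the polyhedral verification that every triple arising from $\mathcal{R}_{SW}$ under the projection $(\delta_1,\delta_2) \mapsto \delta_1+\delta_2$ admits such a convex decomposition among schemes (i)--(v). This reduces to a finite case analysis keyed on the orderings of $R_x$ and $R_y$ against $H(X|Y), H(Y|X), H(X), H(Y)$, with the three SW inequalities providing exactly the slack needed to close each case. I expect this combinatorial verification to be routine but tedious, and it is the only place where the analysis is not immediate.
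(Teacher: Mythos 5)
Your proposal is correct, but it takes a genuinely different route from the paper's. The paper time-shares between just \emph{two} composite points, $P_1=\left(\max\{H(X|Y)-D,0\},\,\min\{H(Y),H(Y)-(D-H(X|Y))\}\right)$ and its mirror $P_2$, each realized by sending one source losslessly and coding the other using Example \ref{ex:WZ} (the Wyner--Ziv function $R_Y(D)=H(X|Y)-D$ under this distortion measure) and Example \ref{ex:RD}. You instead time-share among five elementary schemes---two single-source lossless codes with posterior fill-in, the two Slepian--Wolf corners, and the empty scheme---and reduce the claim to a polyhedral verification. That verification, which you flag as the main obstacle, is indeed routine and in fact collapses to two lines: $P_1$ is the convex combination of your scheme (ii) with (iii) when $D\le H(X|Y)$ (weight $1-D/H(X|Y)$ on (iii)), and of (ii) with (v) when $D>H(X|Y)$ (weight $(H(X,Y)-D)/H(Y)$ on (ii)); symmetrically $P_2$ comes from (i), (iv), (v). Every other point of the region is dominated by a point on the segment between $P_1$ and $P_2$, and the per-letter additivity of $d(\cdot)$ in (\ref{eqn:dx_expression}) makes block time-sharing legitimate. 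What your route buys is that it never invokes the Wyner--Ziv theorem at all: only lossless coding plus the optimal estimator of Lemma \ref{lem:ExpectedDistortion} are needed, which makes the linearity of the rate--distortion tradeoff transparent. One technicality you should repair in the write-up: the reproduction $\hat{z}_i(x,y)=\mathbf{1}(x=X_i)\,p(y|x)$ (and the point masses at the SW corners) assigns probability zero to the true pair on decoding-error events, so the \emph{expected} distortion is formally infinite even though errors are rare. Either use the exact posterior given the received messages, as Lemma \ref{lem:ExpectedDistortion} actually prescribes, and bound $\frac{1}{n}\sum_{i} H(X_i,Y_i\mid U)$ by a Fano-type argument, or mix an $\epsilon$-uniform component into the point masses so that an error costs at most $\log(m\ell/\epsilon)$. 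The paper's one-line appeal to Examples \ref{ex:WZ} and \ref{ex:RD} glosses over the same issue, so this is a refinement rather than a defect unique to your argument.
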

\begin{proof}
By Theorem \ref{thm:SW}, we must have:
\begin{align*}
R_x&\geq H(X|Y)-\delta_1\\
R_y&\geq H(Y|X)-\delta_2\\
R_x+R_y&\geq H(X,Y)-\delta_1-\delta_2.
\end{align*}
Let $D=\delta_1+\delta_2$.  For fixed $\delta_1,\delta_2$, any nontrivial $(R_x,R_y)$ pair in this region can be achieved by an appropriate time-sharing scheme between the two points
\begin{align*}
P_1&=\left( \max\left\{H(X|Y)-D,0\right\}, \right. \\
& ~~~~~\left. \min\left\{H(Y),H(Y)-\left(D-H(X|Y) \right) \right\} \right), \mbox{~and}\\
P_2&=\left( \min\left\{H(X),H(X)-\left(D-H(Y|X) \right)\right\}, \right. \\
& ~~~~~\left. \max\left\{H(Y|X)-D,0 \right\} \right).
\end{align*}
By the results given in Examples \ref{ex:WZ} and \ref{ex:RD}, point $P_1$ allows $X,Y$ to be recovered with distortion $D$. Symmetrically, point $P_2$ allows $X,Y$ to be recovered with distortion $D$.  Thus, using the appropriate time-sharing scheme to generate average rates $(R_x,R_y)$, we can create a sequence of rate distortion codes that achieve the point $(R_x,R_y,D)$ for the JD network.
\end{proof}

\subsection{Proof of Theorem \ref{thm:achievableRegionX} }
The proof of Theorem \ref{thm:achievableRegionX} is similar in spirit to the proof of Theorem \ref{thm:achievableRegion} and has therefore been moved to the appendix.  The key difference between the proofs is that, instead of showing a correspondence between $\mathcal{R}$ and the SMSW achievable rate region, we show a correspondence between $\mathcal{R}_X$ and the Ahlswede-K\"{o}rner-Wyner achievable rate region.

\section{Conclusion}\label{sec:conc}  In this paper, we gave the rate distortion regions for two different multiterminal networks subject to distortion constraints using the entropy distortion measure.  In the case of the Joint Distortion and X-Distortion networks, we observed that any point in the rate distortion region can be achieved by timesharing between points in the SMSW region and the Ahlswede-K\"{o}rner-Wyner regions respectively.  Perhaps this is an indication that the rate distortion region for more general multiterminal source networks (subject to distortion constraints using the entropy distortion measure) can be characterized by simpler source networks for which achievable rate regions are known.  This is one potential direction for future investigation.

\section*{Appendix}
This appendix contains a sketch of the proof for Theorem \ref{thm:achievableRegionX}.
\begin{claim}
If $(R_x,R_y,D_x)$ is an achievable rate-distortion vector for the XD network, then $(R_x+D_x,R_y)$ is an achievable rate vector for the source coding with side information problem.
\end{claim}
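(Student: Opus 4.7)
The plan is to mirror the construction used in the proof of Claim 1 for the JD network, converting a rate-distortion code for the XD network into a lossless code for the Ahlswede--K\"orner--Wyner (AKW) source coding with side information problem by appending a small random binning layer to the $X$-encoder. Given a sequence of $(2^{nR_x},2^{nR_y},n)$-rate distortion codes $(f_x,f_y,g)$ satisfying $\mathbb{E}[d_x(X^n,g(f_x,f_y))]\leq D_x$, I would construct an AKW code whose $X$-rate is $R_x+D_x+O(\epsilon)$ and whose $Y$-rate is $R_y$, and which recovers $X^n$ losslessly with vanishing error probability.

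First I would define the AKW encoders by taking $\phi_x(X^n)=f_x(X^n)$ and $\phi_y(Y^n)=f_y(Y^n)$, and adding an auxiliary encoder $\psi_x$ that randomly bins the $\mathcal{X}^n$ sequences into $2^{n(D_x+3\epsilon)}$ bins, with $\psi_x(X^n)$ being the bin index of $X^n$. The AKW decoder first computes $\hat{V}^n=g(\phi_x,\phi_y)$ using the XD decoder, and then searches within bin $\psi_x(X^n)$ for the unique $\tilde{X}^n$ satisfying $d_x(\tilde{X}^n,\hat{V}^n)\leq D_x+\epsilon$. Declaring $\tilde{X}^n$ to be the estimate of $X^n$ gives a candidate lossless code; the total $X$-rate is $R_x+D_x+3\epsilon$, which can be made arbitrarily close to $R_x+D_x$.

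The error analysis then proceeds by considering two event types, in the style of the JD claim. The first is that the XD reconstruction has peak distortion exceeding $D_x+\epsilon$; by Lemma \ref{lem:peakDistortion} (applied to $d_x$ rather than the joint measure, which is the same argument), this occurs with probability at most $\epsilon$ for $n$ large enough. The second is that some $\tilde{X}^n\neq X^n$ in bin $\psi_x(X^n)$ satisfies $d_x(\tilde{X}^n,\hat{V}^n)\leq D_x+\epsilon$; here I would invoke the $X$-only version of Lemma \ref{lem:ConditionalDistortionSetSize}, which bounds the size of $\mathcal{A}_x(\hat{v}^n)$ by $2^{n(D_x+2\epsilon)}$, and combine this with the fact that each other sequence lands in a given bin with probability $2^{-n(D_x+3\epsilon)}$. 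A union bound over the at most $2^{n(D_x+2\epsilon)}$ distortion-typical candidates yields a collision probability of $2^{-n\epsilon}$, which vanishes.

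The main technical obstacle I anticipate is a minor one, namely the same technical issue that arose in the JD proof: the marginal distortion $D_x^{(n)}$ achieved by a given code in the sequence need not converge, so I would extract a convergent subsequence from $[0,D_x+\epsilon]$ using compactness and pass to the limit along it. Conceptually, the heart of the argument is the size bound on $\mathcal{A}_x(\hat{v}^n)$, which is exactly what makes the entropy-based distortion measure convert into a rate budget of $D_x$ bits, allowing the $X$-distortion to be paid for by an extra $D_x$ bits of $X$-rate rather than requiring any change to the $Y$-rate or to the auxiliary random variable $U$ involved in the AKW scheme.
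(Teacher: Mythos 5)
Your proposal is correct and takes essentially the same approach as the paper's appendix proof: keep $\phi_x=f_x$, $\phi_y=f_y$, append a random binning of $\mathcal{X}^n$ into $2^{n(D_x+3\epsilon)}$ bins, and use the bound $|\mathcal{A}_x(\hat{v}^n)|\leq 2^{n(D_x+2\epsilon)}$ from Lemma \ref{lem:ConditionalDistortionSetSize} so the decoder recovers $X^n$ losslessly with high probability at total $X$-rate $R_x+D_x+3\epsilon$. Your explicit error analysis (the peak-distortion event via Lemma \ref{lem:peakDistortion} and the compactness/subsequence remark) simply fills in details that the paper's sketch leaves implicit.
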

\begin{proof}
Suppose we have a sequence of $(2^{nR_x},2^{nR_y},n)$-rate distortion codes satisfying
\begin{align*}
\lim_{n\rightarrow\infty}\mathbb{E}\left[d_x(X^n,g(f_{x}(X^n),f_{y}(Y^n)) \right]\leq D_x.
\end{align*}

The basic idea is to let the $X$-encoder send $f_x(X^n)$ (requiring rate $R_x$) and have the $Y$-encoder send $f_y(Y^n)$ (requiring rate $R_y$).  By Lemma \ref{lem:ConditionalDistortionSetSize}, the number of $X^n$ sequences that lie in $\mathcal{A}_x(\hat{v}^n)$ is less than $2^{n(D_x+2\epsilon)}$.  Therefore, if the $X$-encoder performs a random binning of the $X^n$ sequences into $2^{n(D_x+3\epsilon)}$ and sends the bin index corresponding to the observed sequence $X^n$ (incurring an additional rate of $D_x+3\epsilon$), the decoder can recover $X^n$ losslessly with high probability.
\end{proof}

\begin{claim}
If $(R_x+D_x,R_y)$ is an achievable rate-distortion vector for the source coding with side information network, then $(R_x,R_y,D_x)$ is an achievable rate distortion vector for the XD network.
\end{claim}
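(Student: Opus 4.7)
The plan is to convert an AKW source-coding-with-side-information code operating at rate pair $(R_x+D_x, R_y)$ into an XD code at $(R_x, R_y, D_x)$ by stripping $D_x$ bits off the $X$-encoder and letting the decoder form a posterior estimator whose distortion is controlled by Lemma \ref{lem:ExpectedDistortion}. Achievability of $(R_x+D_x, R_y)$ for AKW supplies an auxiliary $U$ with $X - Y - U$ satisfying $R_x + D_x \geq H(X|U)$ and $R_y \geq I(Y;U)$, which is exactly the shape of the target XD region.

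First I would deploy the standard AKW covering at the $Y$-encoder: draw a codebook of $\sim 2^{nI(Y;U)}$ sequences i.i.d.\ from $p(u)$, map each typical $Y^n$ to a jointly typical codeword $U^n$, and transmit its index using rate $R_y \geq I(Y;U)$; by the covering lemma the decoder recovers $U^n$ with error probability $o(1)$. Independently, the $X$-encoder bins $\mathcal{X}^n$ uniformly at random into $2^{nR_x}$ bins and transmits $b_x(X^n)$. Given $(U^n, b_x)$, the decoder outputs the coordinatewise posterior $\hat{v}_i[U^n, b_x](x) := \Pr(X_i = x \mid U^n, b_x(X^n))$ prescribed by Lemma \ref{lem:ExpectedDistortion}. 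That lemma then pins the expected distortion to $\frac{1}{n} H(X^n \mid U^n, b_x(X^n))$, so the proof reduces to bounding this conditional entropy by $D_x + o(1)$.

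For the bound, observe that for a typical $U^n$ the jointly typical $X^n$ form a set of size at most $2^{n(H(X|U)+\epsilon)}$, and uniform random binning into $2^{nR_x}$ bins concentrates the number of such sequences per bin at roughly $2^{n(H(X|U) - R_x)} \leq 2^{n D_x}$, using $R_x + D_x \geq H(X|U)$. Hence the posterior over $X^n$ given $(U^n, b_x)$ is effectively supported on $\sim 2^{n D_x}$ candidates, and after absorbing the vanishing mass of atypical events and selecting a good deterministic binning and $U^n$-codebook one concludes $H(X^n \mid U^n, b_x)/n \leq D_x + 3\epsilon$, proving achievability.

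The main obstacle is precisely that last step: one must argue, jointly over the random covering codebook and the random binning, that no typical $U^n$ sees an overpopulated bin of jointly typical $X^n$ and that the tail contribution of atypical events to the conditional entropy is $o(1)$, so that Lemma \ref{lem:ExpectedDistortion} really delivers distortion close to $D_x$ and not a vacuous $\log|\mathcal{X}|$. This is a standard Slepian-Wolf/Wyner-Ziv concentration-plus-derandomization argument, but it is the one ingredient not already packaged by the earlier lemmas.
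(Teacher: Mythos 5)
There is a genuine gap, and it is fatal rather than merely technical: you misapply Lemma \ref{lem:ExpectedDistortion}. For the XD network the reproduction is a sequence of \emph{per-symbol} distributions $\hat{v}_i\in\Delta_m$, so with decoder knowledge $W=(U^n,b_x(X^n))$ the best attainable expected distortion is $\frac{1}{n}\sum_{i=1}^n H(X_i\mid W)$ (this is exactly what the lemma says, with the posterior taken coordinatewise), \emph{not} $\frac{1}{n}H(X^n\mid W)$. These differ, since $\frac{1}{n}\sum_i H(X_i\mid W)\geq \frac{1}{n}H(X^n\mid W)$, and in your scheme the gap is of order $R_x$. Your concentration argument does correctly bound the block quantity: the posterior is supported on roughly $2^{nD_x}$ candidates in a bin, so $\frac{1}{n}H(X^n\mid U^n,b_x)\leq D_x+o(1)$. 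But an unstructured random bin is a non-local hash: a random subset of size $2^{nD_x}$ of the conditionally typical set has per-coordinate empirical marginals essentially identical to $p(x|u_i)$, so $\Pr(X_i=x\mid U^n,b_x)\approx p(x|u_i)$ and $\frac{1}{n}\sum_i H(X_i\mid U^n,b_x)\approx H(X|U)$, not $D_x$. Toy check: for $X^n$ uniform on $\{0,1\}^n$ hashed into $2^{nR}$ bins, $H(X^n\mid b)\approx n(1-R)$, yet $\Pr(X_i=0\mid b)\approx 1/2$ for every $i$, so the per-letter log-loss stays at $1$ bit per symbol, not $1-R$. Thus your scheme spends rate $R_x$ on hash bits that carry essentially no per-letter soft information and still incurs distortion $\approx H(X|U)>D_x$; the ``standard concentration-plus-derandomization'' you defer to in the last step cannot repair this, because the quantity it controls is the wrong one. (Schemes that do use binning under this distortion measure always resolve the bin index to a unique sequence or codeword \emph{before} forming the posterior; the bin is never left as soft information.)

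The paper avoids this by \emph{localizing} the uncertainty across symbols instead of hashing it. Taking w.l.o.g. $R_x+D_x=H(X|U)$ and $R_y=I(Y;U)$, it time-shares: a fraction $1-D_x/H(X|U)$ of the time it runs the full-rate lossless AKW code (the $X$-encoder bins at rate $H(X|U)$, the decoder recovers $X^n$ exactly, distortion $\approx 0$), and the remaining fraction $D_x/H(X|U)$ of the time the $X$-encoder sends nothing, in which case the posterior given $U^n$ alone yields distortion $H(X|U)$ by Lemma \ref{lem:ExpectedDistortion}. Since the distortion is additive over symbols, the average distortion is exactly $D_x$ at average $X$-rate $\left(1-D_x/H(X|U)\right)H(X|U)=R_x$. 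This is also the conceptual moral behind the ``one bit of rate per bit of distortion'' shape of $\mathcal{R}_x$: under the entropy distortion measure, partial rate can only be traded for distortion through lossless recovery on a fraction of the blocks (or an equivalently structured scheme), which is precisely time-sharing.
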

\begin{proof}
Since $(R_x+D_x,R_y)$ is an achievable rate vector, there exists some conditional distribution $p(u|y)$ so that $R_x+D_x \geq H(X|U)$ and $R_y \geq I(Y;U)$.  WLOG, reduce $R_x$ and $R_y$ if necessary so that $R_x+D_x = H(X|U)$ and $R_y = I(Y;U)$.  Now, we construct a sequence of codes that achieve that point in the standard way.  In particular, generate $2^{n(R_y+\epsilon)}$ different $U^n$ sequences independently i.i.d. according to $p(u)$. Upon observing $Y^n$, the $Y$-encoder finds a jointly typical $U^n$ and sends the corresponding index to the decoder. At the $X$-encoder, bin the $X^n$ sequences into $2^{n(R_x+D_x+2\epsilon)}$ bins and, upon observing the source sequence $X^n$, send the corresponding bin index to the decoder.  With high probability, the decoder can reconstruct $X^n$ losslessly.

From this sequence of codes, we can construct a sequence of rate distortion codes that achieve the point $(R_x,R_y,D_x)$ as follows.  At the $X$-encoder, employ the following time-sharing scheme:
\begin{enumerate}
\item Use the lossless code described above with probability $(1-D_x/H(X|U))$.  In this case, the distortion on $X$ can be made arbitrarily small.  Note that we can assume w.l.o.g. that $D_x<H(X|U)$ since distortion $D_x=H(X|U)$ can be achieved when the decoder only receives the sequence $U^n$.
\item With probability $D_x/H(X|U)$, the $X$-encoder sends nothing, while the $Y$-encoder continues to send $U^n$.  In this case, the distortion on $X$ is $H(X|U)$.
\end{enumerate}

Averaging over the two strategies, we obtain a sequence of rate distortion codes that achieve the rate distortion triple $(R_x,R_y,D_x)$.
\end{proof}

\newpage



%

\end{document}